\tikzstyle{list} = [rectangle, minimum width=2cm, minimum height=0.8cm, text centered, draw=black]
\tikzstyle{node} = [rectangle, text centered, minimum width = .7cm, draw=black]
\tikzstyle{arrow} = [thick, ->, >=stealth]
\tikzstyle{darrow} = [thick, <->]
\newtheorem{theorem}{Theorem}
\newtheorem{corollary}{Corollary}[theorem]
\theoremstyle{definition}
\newtheorem{definition}{Definition}
\theoremstyle{remark}
\title{List Heaps}
\author{Andrew Frohmader\thanks{andrewfrohmader@gmail.com}}
\date{February 2018}
\begin{document}

\maketitle

\begin{abstract}
    This paper presents a simple extension of the binary heap, the List Heap. We use List Heaps to demonstrate the idea of \textit{adaptive heaps}: heaps whose performance is a function of both the size of the problem instance and the disorder of the problem instance. We focus on the presortedness of the input sequence as a measure of disorder for the problem instance. A number of practical applications that rely on heaps deal with input that is not random. Even random input contains presorted subsequences. Devising heaps that exploit this structure may provide a means for improving practical performance. We present some basic empirical tests to support this claim. Additionally, adaptive heaps may provide an interesting direction for theoretical investigation.
\end{abstract}

\section{Introduction} \label{introduction}
    A heap is a data structure which holds a finite set of items. Each item is associated with a key drawn from a totally ordered set. Heaps support the following operations:

\begin{table}[h]
    \begin{tabular}{p{0.3\linewidth} p{0.6\linewidth}}
        \textit{make heap (h):} & Create and return a new, empty heap $h$\\
        \textit{insert (h, x, k):} & Insert item $x$ with key $k$ into heap $h$ and return a reference to where $x$ is stored in $h$\\
        \textit{find min (h):} & Return a reference to where the item with the minimum key is stored in heap $h$\\
        \textit{delete min (h):} & Delete the item with the minimum key from heap $h$ and return it\\
        \textit{decrease key (h, x, k):} & Decrease the key of item $x$ in heap $h$ to $k$\\
        \textit{delete (h, x):} & Delete item $x$ from heap $h$\\
        \textit{meld ($h_{1}$, $h_{2}$):} & Return a heap formed by taking the union of heaps $h_{1}$ and $h_{2}$\\
    \end{tabular}
\end{table}

    The binary heap was introduced by Williams in 1964 \cite{binary:heap:1964}. Its simplicity and speed have made it and its generalization, the d-array heap, a popular choice in practice. It supports \textit{insert}, \textit{delete min}, and \textit{decrease key} in $O(\log{n})$ time. It can be used to sort $n$ items in $O(n\log{n})$, which matches the worst-case lower bound for a comparison sort. Vuillemin's introduction of the binomial queue in 1978 \cite{binomial:queue:1978}, added \textit{meld} to the list of operations supported in $O(\log{n})$.
    
    In 1984, Fibonacci heaps \cite{fibonacci:heap:1987}, an extension of the binomial queue, achieved $O(1)$ amortized time for \textit{insert}, \textit{decrease key}, and \textit{meld}. The \textit{decrease key} result was particularly important in that it improved the worst-case bounds for a number of well-known graph algorithms. More recently, a few structures have achieved worst-case $O(1)$ time for \textit{decrease key} and \textit{meld}, see \cite{worst:case:efficient:1996} or \cite{strict:fibonacci:heap:2012}. While this work has produced interesting and important theoretical results, it has failed to yield a structure that consistently outperforms the original binary heap and its variants in practice \cite{empirical:study}.

    In this paper, we return to the binary heap and develop a simple extension, the List Heap. This straightforward extension can be given \textit{adaptive} operations: operations whose performance depends not only on the problem size, but also on the level of presortedness (disorder) in the problem instance.  A bit of work has gone into developing the theory of adaptive sorting algorithms, see \cite{adaptive:sorting:survey:1992}, but to our knowledge, this work has not migrated into the related work on heap data structures, \cite{priority:queue:survey}. We believe that \textit{adaptive heaps} may provide an interesting angle for theoretical investigation. Additionally, they may provide a means of improving the empirical performance of current heap variants. The List Heap is a first step in this direction. 
    
    List Heaps support \textit{decrease key}, \textit{insert}, and \textit{delete min} in $O(\log{k})$, where $k$ is the number of lists in the List Heap. As we will show, the number of lists in a List Heap is a function of both the size of the problem instance and the disorder of the problem instance. We returned to the binary heap because of its simplicity and ubiquity, but this was not without costs. List Heaps lose the $O(1)$ \textit{insert}, \textit{decrease key}, and \textit{meld} of more sophisticated structures.
    
     \subsection{Preliminaries} \label{preliminaries}
    
    Here we present notational conventions and definitions used through the remainder of this paper. Let $X=\langle x_{1},...,x_{n}\rangle$ be a sequence of $n$ distinct elements $x_{i}$ from some totally ordered set. If $x_{1}<x_{2}<...<x_{n}$, $X$ is \textit{monotonically increasing} or just \textit{increasing}. If $x_{1}>x_{2}>...>x_{n}$, $X$ is \textit{monotonically decreasing} or just \textit{decreasing}. A sequence is $monotonic$ if it is either increasing or decreasing. The $head$ of a sequence $X$ is $x_{1}$, the $tail$ is $x_{n}$. If $A$ is a set, then $||A||$ is its cardinality. If $X$ is a sequence, then $|X|$ is its length. For two sequences $X=\langle x_{1},...,x_{n}\rangle$ and $Y=\langle y_{1},...,y_{m}\rangle$, their $concatenation$ $XY$ is the sequence $\langle x_{1},...,x_{n},y_{1},...,y_{m}\rangle$. If the sequence $X$ contains no elements, we write $X=\emptyset$.
    
    A sequence obtained by deleting zero or more elements from $X$ is called a $subsequence$ of $X$. A subsequence $Y=\langle x_{i},...,x_{j}\rangle$ of $X$ is $consecutive$ if the indices $i,...,j$ are consecutive integers.
    
    Let $Y = \langle x_{i},...,x_{j} \rangle$ and $Z = \langle x_{k},...,x_{l} \rangle$ be subsequences of $X$. The \textit{intersection} of $Y$ and $Z$, $Y \cap Z$, is the subsequence of $X$ obtained by deleting from $X$ all $x_{h}$ not in both $Y$ and $Z$ for $1 \leq h \leq n$. Similarly, the $union$ of $Y$ and $Z$, $Y \cup Z$, is the subsequence of $X$ obtained by deleting from $X$ all $x_{h}$ not in either $Y$ or $Z$ for $1 \leq h \leq n$. $Y$ and $Z$ are \textit{disjoint} if $Y \cap Z = \emptyset$. Let $P = \{X_{1},...,X_{k}\}$ be a set of disjoint subsequences of $X$, if the union of all subsequences in $P$ equals $X$, then $P$ is a \textit{partition} of $X$.

    \subsection{Adaptive Sorting} \label{adaptive_sorting}
    
    This section gives a very brief review of adaptive sorting. Heaps solve a generalized sorting problem, so adaptive sorting provides some intuition for why adaptive heaps might be useful. For a more detailed survey of adaptive sorting, see \cite{adaptive:sorting:survey:1992} or \cite{framework:1995}.

    Consider the sorting problem: take as input some arbitrary sequence $X=\langle x_{1},...,x_{n}\rangle$ of elements from a totally ordered set and return a permutation of the sequence that is in increasing sorted order. Comparison based sorting has a well-know worst-case lower bound of $\Omega(n\log{n})$ \cite{CLRS}. However, it is clear that this lower bound must not always hold. What if our input sequence is already sorted? What if only one element is out of place? What if it is the concatenation of two sorted subsequences? The lower bound can be refined if we account for the disorder in the input sequence.
    
    The main achievements of the adaptive sorting literature are: proposing a variety of measures of disorder, proving new lower bounds with respect to these measures, developing sorting algorithms whose performance matches these new lower bounds, and developing a partial order on the set of measures.
    
    We stop here and again direct the reader to \cite{adaptive:sorting:survey:1992} or \cite{framework:1995} for more information.
    
    \subsection{Outline of Paper}
    
    The remainder of the paper is organized as follows. Section \ref{Sec:Adaptive Heaps} discusses why adaptive heaps might be worth developing. Section \ref{list_heaps} presents List Heaps - their structure and operations. Section \ref{Empirical Results} presents the results of a series of brief empirical tests suggesting List Heaps may have promise in practice. Section \ref{Conclusion} summarizes results obtained.
    
\section{Adaptive Heaps} \label{Sec:Adaptive Heaps}
    
    This section presents a few reasons for developing adaptive heaps.
    
    We use the term \textit{adaptive heap} loosely throughout this paper to refer to a heap whose performance is some function of the level of presortedness (disorder) of the input sequence. There are clearly complications we are glossing over, the largest of which is how to deal with \textit{decrease key}. Further, heap problem instances can have disorder of different types, not just related to the presortedness of the input sequence. A starting point to formalize these notions is the adaptive sorting literature. It is fairly easy to extend the results of adaptive sorting to arbitrary sequences of \textit{insert} and \textit{delete min}; \textit{decrease key} may prove more challenging. 
    
    Why build a heap whose performance adapts to the presortedness of the input sequence? Just as the lower bound for sorting can be restated as a function of the presortedness of the input sequence, the lower bound for an arbitrary heap problem can be restated as a function of the disorder of the problem instance. As \textit{delete min} is the only operation for which $O(1)$ performance is not available, the disorder of the problem directly impacts the bound on \textit{delete min}. Thus, we can restate the bound on \textit{delete min} as $O(\log{k})$ where $k$ is some measure of disorder less than or equal to the number of elements in the heap. Then, intuitively, the bound on \textit{delete min} ranges from $O(1)$ for sorted input to $O(\log{n})$ for random input.
    
    The bounds on algorithms which rely on heaps, such as Dijkstra's shortest path algorithm \cite{Dijkstra}, can similarly be adjusted to reflect disorder. For example, we can restate the Dijkstra bound as $O(n\log{k} + m)$ where $n$ is the number of vertices, $m$ is the number of edges, and $k$ is a measure of disorder with $k\leq n$. 
    
    The discussion above suggests that heap-based applications that have input sequences with some level of presortedness could benefit from adaptivity. What is less obvious is that even heap-based applications that have random input sequence can benefit from adaptivity. Here, the benefit is not in asymptotic performance, but in constant factors. Let $X$ be a random sequence of $n$ elements. As $n \rightarrow \infty$, the minimum number of increasing subsequences into which $X$ can be partitioned is approximately $2\sqrt{n}$ subsequences \cite{Aldous99longestincreasing}. As we will show in Section \ref{Enc Adaptive}, we can create an extension of the binary heap that is adaptive to the minimum number of increasing subsequences. Thus, where a binary heap constructed from $X$ performs roughly $2\log{n}$ comparisons to \textit{delete min} in the worst-case, an adaptive heap constructed from $X$ can perform \textit{delete min} in worst-case $\log{n}$ comparisons - cutting the constant factor in half.
    
\section{List Heaps} \label{list_heaps}

    This section introduces the List Heap. The List Heap structure closely mirrors that of the binary heap. The hope is that this similarity makes the changes required to add adaptivity to heaps clear. Additionally, we believe adaptive variants of the binary heap have the greatest potential to be immediately useful in practice. However, binary heaps are flawed as a choice in that \textit{decrease key} and \textit{insert} take $O(\log{n})$ time while $O(1)$ implementations of these operations are clearly possible. Thus the List Heap cannot be optimal on all problem instances. We tolerate this flaw and show that we can construct a heap whose performance is a function of a number of measures of disorder. In particular, we will focus on developing heaps that are adaptive to $runs$, $SUS$, and $Enc$ (defined later). These measures of disorder partition the input into monotonic subsequences which are particularly simple for a heap to use.
    
    This paper focuses only on \textit{insert}, \textit{delete min}, and \textit{decrease key} and assumes all keys are unique. We first consider the structure of the heap. Next, we outline operations that are adaptive with respect to $runs$. Finally, we present operations that are adaptive with respect to $SUS$ and $Enc$.
    
    \subsection{Structure}
    
    A \textit{List Heap} $h$ is an array of circular doubly linked lists of nodes, $h = \langle l_1, ... , l_k \rangle$. Each node has a unique key associated with it. Throughout this paper we will refer to the node $x$ and its key interchangeably. We use the notation $x_{(i,j)}$ to refer to the $jth$ node in list $l_{i}$. List Heaps must maintain two invariants. 
    
    \begin{enumerate}
        \item  For any list $l_i$ in heap $h$, the nodes of $l_i$ are in increasing sorted order by key value, $l_i = \langle x_{(i,1)}, ... , x_{(i,j)} \rangle$, with $x_{(i,1)} < x_{(i,2)} < ... < x_{(i,j)}$.
        \item The lists are arranged in \textit{heap order} in the array, that is $l_{\lfloor i / 2 \rfloor} < l_i$ where we define $l_i < l_j$ if $x_{(i,1)} < x_{(j,1)}$.
    \end{enumerate}
    
    With the invariants above, we can view a List Heap as a standard binary heap of elements: each binary heap element is a list and the key for each list is the key of its head node. With this structure, the minimum node is $x_{(1,1)}$ - the head node of list $l_1$. We will refer to $l_1$ as the \textit{root list} of the heap.
    
    \begin{figure}[h]
        \centering
        \begin{tikzpicture}[node distance=.8cm]
    
        \node (l0) [list] {};
        \node (l1) [list, below of=l0] {$l_1$};
        \node (l2) [list, below of=l1] {$l_2$};
        \node (l3) [list, below of=l2] {$l_3$};
        \node (l4) [list, below of=l3] {$l_4$};
        \node (l5) [list, below of=l4] {$l_5$};
        \node (l6) [list, below of=l5] {$l_6$};
        \node (l7) [list, below of=l6] {$l_7$};
        
        \node[text width=1.3cm, left of=l0] {0};
        \node[text width=1.3cm, left of=l1] {1};
        \node[text width=1.3cm, left of=l2] {2};
        \node[text width=1.3cm, left of=l3] {3};
        \node[text width=1.3cm, left of=l4] {4};
        \node[text width=1.3cm, left of=l5] {5};
        \node[text width=1.3cm, left of=l6] {6};
        \node[text width=1.3cm, left of=l7] {7};
        
        \node (n1) [node, right of=l1, xshift=1.2cm] {1};
        \node (n3) [node, right of=n1, xshift=.5cm] {3};
        
        \node (n4) [node, right of=l2, xshift=1.2cm] {4};
        \node (n14) [node, right of=n4, xshift=.5cm] {14};
        \node (n15) [node, right of=n14, xshift=.5cm] {15};
        
        \node (n2) [node, right of=l3, xshift=1.2cm] {2};
        \node (n8) [node, right of=n2, xshift=.5cm] {8};
        \node (n9) [node, right of=n8, xshift=.5cm] {9};
        
        \node (n5) [node, right of=l4, xshift=1.2cm] {5};
        \node (n13) [node, right of=n5, xshift=.5cm] {13};
        
        \node (n6) [node, right of=l5, xshift=1.2cm] {6};
        \node (n10) [node, right of=n6, xshift=.5cm] {10};
        \node (n12) [node, right of=n10, xshift=.5cm] {12};
        
        \node (n11) [node, right of=l6, xshift=1.2cm] {11};

        \node (n7) [node, right of=l7, xshift=1.2cm] {7};
        \node (n16) [node, right of=n7, xshift=.5cm] {16};

        \draw [arrow] (l1) -- (n1);
        \draw[darrow] (n1) -- (n3);
        \draw[darrow] (n1) to [out=155,in=25] (n3);
        
        \draw [arrow] (l2) -- (n4);
        \draw [darrow] (n4) -- (n14);
        \draw [darrow] (n14) -- (n15);
        \draw[darrow] (n4) to [out=160,in=20] (n15);
        
        \draw [arrow] (l3) -- (n2);
        \draw [darrow] (n2) -- (n8);
            \draw [darrow] (n8) -- (n9);
        \draw[darrow] (n2) to [out=160,in=20] (n9);
    
        \draw [arrow] (l4) -- (n5);
        \draw [darrow] (n5) -- (n13);
        \draw[darrow] (n5) to [out=155,in=25] (n13);
    
        \draw [arrow] (l5) -- (n6);
        \draw [darrow] (n6) -- (n10);
        \draw [darrow] (n10) -- (n12);
        \draw[darrow] (n6) to [out=160,in=20] (n12);
    
        \draw [arrow] (l6) -- (n11);
        \draw[darrow] (n11) to [out=140,in=40] (n11);
    
        \draw [arrow] (l7) -- (n7);
        \draw [darrow] (n7) -- (n16);
        \draw[darrow] (n7) to [out=155,in=25] (n16);
    
        \end{tikzpicture}  
        \caption{The List Heap Structure}
        \label{fig:list_heap}
    \end{figure}
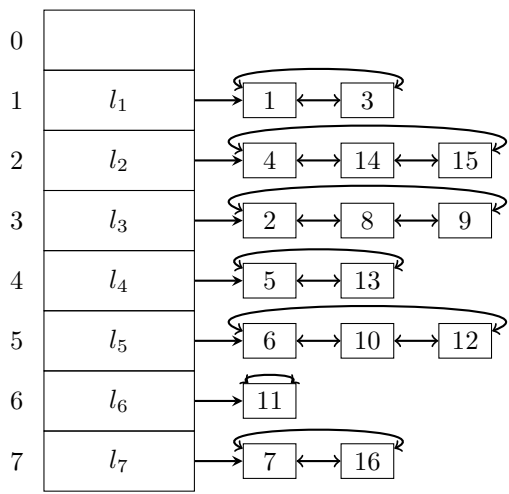
    
    The sorted linked lists are the key feature of the List Heap. They enable us to capture some of the existing order in the input sequence. As a result, List Heaps can be given operations that are adaptive with respect to some measures of disorder.
    
    \subsection{A Runs Adaptive List Heap}
    
    In this section, we present operations for a List Heap that is adaptive with respect to $runs$, abbreviated RA List Heap. We show that the performance of each heap operation is a function of the number of lists in the heap and that the \textit{insert} operation partitions the sequence of inserted elements into $L$ lists where $L$ is less than or equal to the number of $runs$ in the sequence. 
    
    \begin{definition}
        A $run$ is a consecutive decreasing subsequence of maximum length.
    \end{definition}
    
    A subsequence $X_i = \langle x_j,...,x_k \rangle$ of $X$ is a run if and only if $j, ..., k$ are consecutive integers and  $x_{j-1} < x_j > ... > x_k < x_{k+1}$. The total number of $runs$ in a sequence gives a measure of disorder. For example, consider the following random sequence, 
    
    \begin{equation} \label{eq:random sequence}
        \langle 3,15,14,4,9,13,5,12,10,6,1,11,8,16,2,7 \rangle.
    \end{equation}
    
    This sequence of sixteen elements can be partitioned into eight runs,
    
    \begin{equation}
        \langle 3 \rangle, \langle 15,14,4 \rangle, \langle 9 \rangle, \langle 13,5 \rangle, \langle 12, 10, 6, 1 \rangle, \langle 11, 8 \rangle, \langle 16, 2 \rangle, \langle 7 \rangle.
    \end{equation}
    
    A decreasing sequence will consist of just one run, an increasing sequence of $n$ elements will contain $n$ runs, each with only one element. We now outline operations for the RA List Heap.

    \subsubsection{Insert}
        To \textit{insert} a new item $x$ into heap $h = \langle l_1, ... , l_k \rangle$, do the following. If $x$ is less than the head node of $l_k$, (recall that we denote this $x < l_k$), set $i = k$. Otherwise, set $i = k+1$. Then, while $x < l_{\lfloor i/2 \rfloor}$, set $i = \lfloor i/2 \rfloor$ and repeat. Once the loop terminates, append $x$ to the front of list $l_i$. Note that $i$ might equal $k+1$ in which case a new empty list $l_{k+1}$ will need to be created before $x$ is appended. 
        
        Figure \ref{fig:list_heap} shows the RA List Heap generated by inserting the random sequence (\ref{eq:random sequence}) into an empty heap. 
    \subsubsection{Delete Min}
        To perform \textit{delete min} on heap $h$, remove the head node $x_{(1,1)}$ from the root list $l_1$ and set $x_{(1,1)}$ to the return value of the function. If $l_1$ is now empty, replace it with the last list in the heap $l_k$. If $l_1$ is not empty, there is no need to swap it with $l_k$. Either way, at this point $l_1$ might be out of heap order. Restore heap order by calling \textit{heapify down}. This is identical to the heapify operation of the binary heap except that it manipulates entire lists instead of individual nodes. To \textit{heapify down} on a list $l_i$, compare $l_i$ with its two children, $l_{2i}$ and $l_{2i+1}$. If $l_i$ is less than both children, do nothing. Otherwise, swap $l_i$ with its smallest child and repeat. 
    \subsubsection{Decrease Key}
    
        To perform \textit{decrease key} on a node $x_{(i,j)}$, set $x_{(i,j)}$'s key to the new value. Now there are two cases.
        
        \begin{enumerate}
            \item If $x_{(i,j)}$ is a head node, that is if $x_{(i,j)} = x_{(i,1)}$, decreasing $x_{(i,1)}$'s key maintains the sorted order of the list $l_i$ but may destroy the heap order of the array. Call \textit{heapify up} to restore the heap order. To heapify up a list $l_i$, compare $l_i$ to its parent $l_{\lfloor i/2 \rfloor}$. If $l_i < l_{\lfloor i/2 \rfloor}$ swap the lists. Recurse until heap order is restored.
            \item  If $x_{(i,j)}$ is not a head node, $l_i$ may no longer be sorted. If $x_{(i,j)}$ is less than its left sibling $x_{(i,j-1)}$, then the list $l_i$ is out of sorted order. Remove $x_{(i,j)}$ from $l_i$ and reinsert $x_{(i,j)}$ into the List Heap using the \textit{insert} routine.
        \end{enumerate}
    
    \subsubsection{Analysis}
        
        It is clear from their descriptions that each of these functions run in $O(\log{k})$ time, where $k$ is the number of lists in the RA List Heap. 
        
        \begin{theorem} \label{thm:insert_runs}
            A sequence of consecutive insert operations partitions the inserted elements into $L$ lists where $L$ is less than or equal to the minimum number of runs in the input sequence.
        \end{theorem}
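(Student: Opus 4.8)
The plan is to establish the stronger claim that, during a sequence of consecutive \emph{insert}s into an initially empty heap, a brand-new list is created only at an insertion whose element is the head of a run of the input. Granting this, the bound follows at once: $L$ equals the number of insertions that create a new list (an \emph{insert} never destroys or merges lists), distinct such insertions occur at distinct positions of the input and hence at distinct run heads, and the number of run heads of $X$ equals the number of runs of $X$ because the runs partition $X$ and each has a unique first element. Recall that $x_t$ is a run head exactly when $t=1$ or $x_{t-1}<x_t$.

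Two preliminary observations set this up. First, reading off the \emph{insert} routine and using that all keys are distinct, an insertion of $x$ into a heap of $k$ lists creates a new list (the index $i$ stays at the fresh slot $k+1$) if and only if both $x>l_k$ and $x>l_{\lfloor(k+1)/2\rfloor}$ hold at the moment the insertion begins, where $x>l_i$ abbreviates ``$x$ exceeds the head node of $l_i$'', in keeping with the paper's $x<l_i$ notation. Second, an \emph{insert} never swaps two lists and rewrites exactly one list --- the one it prepends $x$ to, or the one it creates; therefore, between two consecutive insertions every list keeps its array position and every list head stays fixed except for that single rewritten list.

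The heart of the proof is the invariant: immediately after any insertion of an element $y$ into a heap whose resulting list count is $k$, we have $y\le l_k$ or $y\le l_{\lfloor(k+1)/2\rfloor}$, where $y\le l_i$ means $y$ does not exceed the head of $l_i$ (and $y$, being prepended, is itself a head). I would prove this by a case split on the sift-up just performed for $y$. If $y$ ends up as the head of the last list $l_k$ --- in particular if $y$ created a new list --- then $y\le l_k$ trivially. Otherwise the sift-up of $y$ began either at the last list or at the slot one past it. In the former case it began there because $y$ is less than the head of $l_k$, a head the insertion then left untouched, so $y\le l_k$. In the latter case the insertion cannot have created a list (that falls under the first case), so its loop fired, and the loop's first step compared $y$ against the head of $l_{\lfloor(k+1)/2\rfloor}$ --- the parent of the vacant slot --- and found it to exceed $y$; that head is then either overwritten by $y$ or left untouched, so $y\le l_{\lfloor(k+1)/2\rfloor}$ either way. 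With the invariant established, the stronger claim follows: if $t\ge2$ and $x_{t-1}>x_t$, apply the invariant to the insertion of $x_{t-1}$; by the second observation the heap seen by the insertion of $x_t$ still satisfies $x_{t-1}\le l_k$ or $x_{t-1}\le l_{\lfloor(k+1)/2\rfloor}$, so $x_t<x_{t-1}$ gives $x_t\le l_k$ or $x_t\le l_{\lfloor(k+1)/2\rfloor}$, and hence $x_t$ fails the creation test of the first observation. The insertion of $x_1$ creates $l_1$, and $x_1$ is a run head by convention, so every list-creating insertion sits at a run head and $L$ is at most the number of runs of $X$.

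The step I expect to be the main obstacle is the heap-index bookkeeping in the second case above: when the sift-up starts from the slot $k+1$, the positions it visits need not lie on the path from the current last leaf $k$ to the root, so one must check carefully that the first position the loop examines is precisely $\lfloor(k+1)/2\rfloor$ --- the parent of the slot the next element would occupy --- and that the head stored there is either rewritten to $y$ or left unchanged after a comparison that found it strictly above $y$. Getting this parent/ancestor arithmetic right, and keeping track of the $k$-versus-$k+1$ off-by-one that separates ``the sift-up began at the last list'' from ``the sift-up began one slot past it'', is the delicate point; the reduction to run heads and the final count are routine once the invariant is proved.
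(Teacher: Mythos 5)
Your proof is correct and takes essentially the same route as the paper's: both reduce the theorem to showing that an element smaller than its immediate predecessor can never pass the new-list creation test, by tracking where the predecessor landed. Your explicit post-insertion invariant ($y \le l_k$ or $y \le l_{\lfloor (k+1)/2 \rfloor}$) is just a tighter packaging of the paper's ``appended to $l_k$ or one of its ancestors'' case analysis, and in particular makes precise the comparison that the paper's Case~2 leaves implicit.
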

        
        \begin{proof}
            Say the consecutive sequence of elements inserted into heap $h = \langle l_1, ... l_k \rangle$ is $X = \langle x_1,..., x_i, x_{i+1}, ... \rangle$. We must show that if $x_i > x_{i+1}$, then $x_{i+1}$ is appended to an existing list. There are a couple of cases to consider.
            
            \begin{enumerate}
                \item If $x_i < l_k$, $x_i$ is appended to $l_k$ or one of $l_k$'s ancestors. Now, $x_{i+1} < x_i$ so $x_{i+1} < l_k$ and $x_{i+1}$ is also appended to $l_k$ or one of $l_k$'s ancestors.
                
                \item If $x_i > l_k$, $x_i$ is appended to $l_{k+1}$ or one of its ancestors.
                
                If $x_i$ is appended to $l_{k+1}$, the new list $l_{k+1}$ must be created. Now, $x_{i+1}$ is compared first to $l_{k+1}$. Since $x_{i+1} < x_i$, $x_{i+1} < l_{k+1}$, thus $x_{i+1}$ is appended to $l_{k+1}$ or one of $l_{k+1}$'s ancestors. 
                
                If instead $x_i$ was appended to an ancestor of $l_{k+1}$, $l_{k+1}$ was not created. Then $x_{i+1}$ is compared first to $l_k$. If $x_{i+1} > l_k$, $x_{i+1}$ is compared to the ancestors of $l_{k+1}$. Since $x_i$ was appended to an ancestor of $l_{k+1}$, $x_{i+1}$ must be less than one of the ancestors, so $x_{i+1}$ is appended to an existing list.
            \end{enumerate}

        \end{proof}
    
    \subsection{An Enc Adaptive List Heap} \label{Enc Adaptive}
    
    In this section we present operations for a List Heap that is adaptive with respect to $runs$, $SUS$, and $Enc$. We will abbreviate this EA List Heap. $SUS$ is due to Levcopoulos and Petersson \cite{SUS:ENC:SMS:1994}, $Enc$ was proposed by Skiena \cite{encroaching}.
    
    \begin{definition}
        $SUS$ \textit{(Shuffled UpSubsequences)} is the minimum number of increasing subsequences into which a sequence $X$ can be partitioned.
    \end{definition}
    
    This differs from runs in that the subsequences are increasing and they are not required to be consecutive. For example, the sequence (\ref{eq:random sequence}) can be partitioned into the following seven increasing subsequences, 
    
    \begin{equation}
        \langle 3, 15, 16 \rangle, \langle 14 \rangle, \langle 4,9,13 \rangle, \langle 5, 12 \rangle, \langle 10, 11 \rangle, \langle 6, 8 \rangle, \langle 1,2,7 \rangle.
    \end{equation}
    
    This is, in fact, an optimal partition, so $SUS$ for the sequence (\ref{eq:random sequence}) is seven.

    \begin{definition}
        An \textit{encroaching set} is an ordered set of increasing sequences $\langle E_{1},...,E_{k}\rangle$ such that the $head(E_{i}) < head(E_{i+1})$ and $tail(E_{i}) > tail(E_{i+1})$ for $1\leq i < k$.
    \end{definition}
    
     Thus, the increasing sequences nest or $encroach$ upon one another. Skiena describes an encroaching set by an algorithm, \textit{melsort}, that builds them. Given an input sequence $Y = \langle y_1, ... , y_n \rangle$ that is a permutation of an ordered set, build the encroaching set $E = \langle E_{1},...,E_{k}\rangle$ as follows. If the item $y_i$ fits on either end of one of the increasing sequences $E_j$, put it there. Otherwise, form a new sequence $E_{k+1}$. Place each item on the oldest $E_j$ upon which it fits. We note that this strategy bears a lot of similarities to Patience Sorting, see \cite{Aldous99longestincreasing}. The encroaching set for our example sequence (\ref{eq:random sequence}) is, 
     
     \begin{equation}
        \langle 1, 3, 15, 16 \rangle, \langle 2, 4, 14 \rangle, \langle 5,9,13 \rangle, \langle 6, 10, 12 \rangle, \langle 7, 8, 11 \rangle.
     \end{equation}
     
     Thus, $Enc$ for the sequence (\ref{eq:random sequence}) is five. For adaptive sorting, an $Enc$ optimal algorithm is also $SUS$ optimal and $runs$ optimal \cite{framework:1995}. Therefore, we focus on developing a heap which is adaptive to $Enc$, thereby also achieving adaptivity with respect to $SUS$ and $runs$.
     
     Again, we emphasize that this paper treats the notation of adaptivity informally. In order to achieve some level of adaptivity with respect to $Enc$, we only need to alter our \textit{insert} function. The \textit{decrease key} and \textit{delete min} functions can remain the same as above. 
    
    \subsubsection{Insert} \label{enc:insert}
    
     The \textit{insert} function attempts to build an encroaching set. 
     
     We adapt Skiena's strategy to the dynamic context of heaps where there is not a fixed input sequence. Let $y$ be a new item to insert and $h = \langle l_1, ... , l_k \rangle$ be the heap. In this context, $l_j$ is roughly equal to $E_j$ of the encroaching list set. There are three cases.
     \begin{enumerate}
         \item Try to insert $y$ at the front of a list in $h$. Use binary search on the lists to find the list that has $y < l_i$ and minimizes $|y - l_i|$. Note, the lists are maintained in heap order, not sorted order. As a result, this binary search may only be an approximate heuristic. Moreover, the binary search may have failed to consider $l_i$'s direct parent. Appending $y$ to the front of $l_i$ without checking $l_i$'s parent could destroy heap order. Thus, we must also compare $y$ to $l_i$'s ancestors. If $y < l_{\lfloor i/2 \rfloor}$, set $i$ = $\lfloor i/2 \rfloor$ and repeat. If we find an $l_i$, we are done. Otherwise, we move on to Case 2.
         \item Try to insert $y$ at the tail of a list in $h$. Use binary search on the tail nodes to find the list $l_i$ that has $tail(l_i) < y$ and minimizes $|y- tail(l_i)|$. If no $l_i$ is found, go to Case 3.
         \item Create a new list $l_{k+1}$ with the new node as its only element and insert the list after $l_k$ in the array. 
     \end{enumerate}
    
    \subsubsection{Analysis}
    
    It is easy to see that the \textit{insert} routine described above runs in $O(\log{k})$, where $k$ is the number of lists in the EA List Heap. 
    
    \begin{theorem} \label{Thm:Enc Adaptive Encroaching}
    A sequence of $n$ consecutive inserts to an empty EA List Heap h partitions the input into an optimal encroaching set.
    \end{theorem}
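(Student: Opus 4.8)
The plan is to split the claim into two parts: \emph{feasibility} — at every moment the lists currently held by $h$ can be ordered so as to satisfy the defining inequalities of an encroaching set, so that at the end we have a genuine encroaching set that partitions $X$ (each element lands in exactly one list and is never moved, since there are only inserts) — and \emph{optimality} — when all $n$ inserts are done the number of lists equals $Enc(X)$, the minimum number of sequences over all encroaching sets of $X$. Feasibility I would prove as an invariant maintained by induction on the prefix of inserts processed; optimality I would obtain by adapting Skiena's analysis of \textit{melsort} \cite{encroaching} (and the closely related pile-counting analysis of patience sorting \cite{Aldous99longestincreasing}), showing that a new list is created only when every encroaching set of the prefix seen so far is already forced to contain one more sequence.

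For feasibility, the invariant is: after any prefix of inserts, the current lists admit an ordering $\langle E_1,\dots,E_k\rangle$ — in general \emph{not} the heap-array order — with $head(E_i) < head(E_{i+1})$ and $tail(E_i) > tail(E_{i+1})$ for $1 \le i < k$, each $E_i$ being internally increasing (List-Heap invariant 1). The base case is the empty heap. The inductive step case-splits on the three branches of \textit{insert}. In Case 3 a singleton list $\langle y\rangle$ is created; one first observes that $y$ exceeds every current head (else Case 1 fires) and is below every current tail (else Case 2 fires), so $\langle y\rangle$ slots in as a new innermost sequence $E_{k+1}$ with $head(E_k) < y < tail(E_k)$, and placing it after $l_k$ preserves heap order because its head exceeds all others. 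In Case 1, $y$ is prepended to the list $l_{i'}$ reached by the binary search followed by the ancestor walk; writing $E_j$ for that list in encroaching order, one must verify $head(E_{j-1}) < y < head(E_j)$, so heads stay strictly increasing after the prepend — the loop ``while $y < l_{\lfloor i/2\rfloor}$'' is exactly what is supposed to enforce the left inequality. Case 2 is symmetric, appending $y$ to a tail while keeping tails strictly decreasing.

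For optimality, I would argue that each list-creation event carries an obstruction. When \textit{insert} reaches Case 3 for an element $y$ with $k$ lists already present, $y$ fits on no end of any current list, so $head(E_j) < y < tail(E_j)$ for every $j$; unwinding the invariant and the construction history of the $E_j$ (in the spirit of the proof of Theorem~\ref{thm:insert_runs}), I would exhibit a family of already-inserted elements — one drawn from the relevant boundary of each current list, together with $y$ — that forces at least $k+1$ sequences in any encroaching set of the prefix. Summing over the input, the final list count $L$ satisfies $L \le Enc(X)$; since feasibility produces an encroaching set with exactly $L$ sequences and $Enc(X)$ is the minimum possible, $L = Enc(X)$ and the partition is optimal. (If one can instead show that the placement rule coincides with melsort's ``oldest fitting sequence'' rule on a fixed input, optimality would follow directly from \cite{encroaching}; but the two rules are not literally identical, so I expect to need the forcing argument or an exchange argument to show our rule is also optimal.)

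I expect the main obstacle to be reconciling the \emph{heap order} of the array with the \emph{encroaching order} used in the invariant. The lists are stored in heap order, so the binary searches of Cases 1 and 2 are, as the text admits, only approximate heuristics, and the ancestor walk in Case 1 restores heap order but not, a priori, the ``pick the best-fitting list'' intent on which the optimality argument leans — indeed one can construct configurations where the approximate search prepends $y$ to a list whose head is not the smallest head exceeding $y$. To push feasibility and optimality through I would either (a) prove that the weaker guarantee actually delivered — that $y$ lands on \emph{some} list whose head (resp.\ tail) falls in the correct gap between consecutive elements of the encroaching order — already suffices for both the invariant and the counting, or (b) strengthen the structural invariant enough that the binary search is in fact exact on the sublist it examines. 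This reconciliation, rather than the (fairly standard) melsort-style counting, is where I expect the real difficulty to lie, and it may force a slightly more careful statement of the \textit{insert} procedure than the one given.
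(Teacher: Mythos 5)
Your plan is sound, but it is considerably heavier than what the paper does, and your anticipated main obstacle has a clean resolution that collapses most of it. The paper's entire proof is the parenthetical alternative you dismiss: it observes that on a pure insert sequence into an empty heap the \textit{insert} routine \emph{is} the list-creation phase of \textit{melsort}, and stops there, inheriting optimality from \cite{encroaching}. The reason this works --- and the reason your worry about heap order versus encroaching order dissolves --- is exactly the strengthened invariant of your option (b): under consecutive inserts with no \textit{delete min} or \textit{decrease key}, the array order and the encroaching order coincide. Inductively, heads are strictly increasing and tails strictly decreasing \emph{along the array}; a sorted array is trivially heap-ordered, so the binary searches of Cases 1 and 2 are exact (the ``approximate heuristic'' caveat only bites once \textit{heapify} has permuted the lists), the ancestor walk in Case 1 never fires, and each of the three cases visibly preserves the invariant (your case analysis already checks this). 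Moreover, under that invariant ``minimize $|y - l_i|$ subject to $y < l_i$'' selects the smallest head exceeding $y$, which is precisely melsort's ``oldest sequence upon which $y$ fits,'' and symmetrically for tails --- so the two placement rules you suspected of differing are literally identical here. With that, your feasibility induction is the whole proof and the forcing/counting argument for optimality is unnecessary; indeed that fallback is the shakiest part of your plan, since a $(k+1)$-element obstruction certifying $Enc \geq k+1$ is not as standard as the decreasing-subsequence witness for $SUS$, and you would effectively be re-proving Skiena's optimality theorem. The honest criticism of the paper is that it asserts the reduction without stating the array-order-equals-encroaching-order invariant; your write-up supplies exactly the missing lemma, and I would present it that way rather than as an independent optimality argument.
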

    
    \begin{proof}
        Simply observe that the \textit{insert} function reduces to the list creation phase of \textit{melsort} if the heap starts empty.
        The only algorithmic difference between the \textit{insert} routine above and Skiena's \textit{melsort} is the addition of the heap order check. 
    \end{proof}
    
    \begin{corollary}
        Given an empty EA List Heap $h$, perform $n$ consecutive insert operations of elements with uniformly random keys. Then, as $n$ grows large, the expected worst-case number of comparisons to delete the minimum item from $h$ is less than or equal to half the the number of comparisons required by a binary heap.
    \end{corollary}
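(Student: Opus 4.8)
The plan is to combine Theorem~\ref{Thm:Enc Adaptive Encroaching} with the known behaviour of the $Enc$ measure on a random permutation. Let $X=\langle x_1,\dots,x_n\rangle$ be the sequence of inserted keys; since the keys are i.i.d.\ uniform they are a.s.\ distinct, so the order type of $X$ is a uniformly random permutation of $\{1,\dots,n\}$. By Theorem~\ref{Thm:Enc Adaptive Encroaching}, after the $n$ inserts the heap $h$ consists of exactly $K:=Enc(X)$ lists, where $Enc(X)$ is the size of an optimal encroaching set of $X$; thus $K$ is a random variable depending only on the random permutation. Delete min on $h$ will then cost, in the worst case, a deterministic function $W(K)$ of $K$, and the quantity to estimate is $\mathbb{E}[W(K)]$.

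First I would pin down the two comparison counts. A \textit{delete min} on a List Heap with $k$ lists removes the head of $l_1$, possibly splices $l_k$ into position $1$, and calls \textit{heapify down}; the list splicing is comparison-free, and \textit{heapify down} costs at most two key comparisons per level over at most $\lfloor\log_2 k\rfloor$ levels, so $W(k)\le 2\log_2 k+O(1)$. The identical reasoning for an ordinary binary heap on $n$ elements gives worst-case \textit{delete min} cost $W_{\mathrm{BH}}(n)=2\lfloor\log_2 n\rfloor=(2-o(1))\log_2 n$. Since $W$ is increasing and $K$ is determined by $X$, the expected worst-case cost on $h$ is $\mathbb{E}[W(K)]\le 2\,\mathbb{E}[\log_2 Enc(X)]+O(1)$, so everything reduces to showing $\mathbb{E}[\log_2 Enc(X)]\le(\tfrac12+o(1))\log_2 n$.

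For that, I would use concavity of $\log$ and Jensen's inequality, $\mathbb{E}[\log_2 Enc(X)]\le\log_2\mathbb{E}[Enc(X)]$, reducing the problem to bounding $\mathbb{E}[Enc(X)]$ by $n^{1/2+o(1)}$. Here I would invoke the relation $Enc(X)\le SUS(X)$ between encroaching sets and shuffled up-sequences (see \cite{encroaching},\cite{framework:1995}; it is also visible in the running example, where $Enc=5\le 7=SUS$), together with the fact recalled in Section~\ref{introduction} that a uniformly random permutation of $n$ elements satisfies $\mathbb{E}[SUS(X)]\sim 2\sqrt n$ \cite{Aldous99longestincreasing}. This yields $\mathbb{E}[Enc(X)]\le(2+o(1))\sqrt n$, hence $\mathbb{E}[\log_2 Enc(X)]\le\tfrac12\log_2 n+1+o(1)$, and therefore $\mathbb{E}[W(K)]\le\log_2 n+O(1)$, which is $(\tfrac12+o(1))\,W_{\mathrm{BH}}(n)$; restated, $\limsup_{n\to\infty}\mathbb{E}[W(K)]/W_{\mathrm{BH}}(n)\le\tfrac12$, as claimed. (If one prefers, the detour through $SUS$ can be skipped by citing directly a $\Theta(\sqrt n)$ estimate for the expected number of encroaching lists on a random permutation; the inequality $Enc(X)\le SUS(X)$ itself follows from the same ``the $j$-th list created forces a strictly decreasing subsequence of length $j$'' induction that underlies the analysis of \textit{melsort}, obtained by chasing the nested tails of the lists backwards in time.)

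The step I expect to need the most care is the third one, and within it the constant rather than the order of magnitude: the argument genuinely needs $Enc(X)=n^{1/2+o(1)}$, not merely $n^{O(1)}$, so one must use the sharp inequality $Enc(X)\le SUS(X)$ and the sharp $2\sqrt n$ asymptotic, rather than only the qualitative fact that $Enc$-optimality implies $SUS$-optimality. A secondary subtlety is that Jensen is used only for an \emph{upper} bound on $\mathbb{E}[\log Enc]$, which is all the one-sided statement needs; a matching lower bound would instead require the ($\Theta(\sqrt n)$-)concentration of $Enc$, which is true but unnecessary here. Finally one should double-check that the factor $2$ coming from ``two comparisons per level'' of \textit{heapify down} cancels cleanly between $W$ and $W_{\mathrm{BH}}$ under whatever exact convention is fixed for counting those comparisons, since that cancellation is exactly what produces the constant $\tfrac12$.
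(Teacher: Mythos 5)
Your proposal follows essentially the same route as the paper's own proof: invoke Theorem~\ref{Thm:Enc Adaptive Encroaching} to conclude the lists form an optimal encroaching set, bound the number of lists by $SUS$ via the $Enc$-optimal-implies-$SUS$-optimal relation from \cite{framework:1995}, apply the $2\sqrt{n}$ asymptotic from \cite{Aldous99longestincreasing}, and observe that $\log(2\sqrt{n}) \approx \tfrac{1}{2}\log n$ halves the $2\log n$ comparison bound. Your treatment is in fact somewhat more careful than the paper's, which passes directly from the asymptotic for $SUS$ to the expected comparison count without the explicit Jensen step you supply.
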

    
    \begin{proof}
        By Theorem \ref{Thm:Enc Adaptive Encroaching}, $h = \langle l_1, ... , l_k \rangle$ where the lists $l_i$ form an optimal encroaching set. As mentioned above, an $Enc$ optimal partition is also $SUS$ optimal, \cite{framework:1995}, so $k$ is less than or equal to $SUS$. Now, as $n \rightarrow \infty$, $SUS \rightarrow 2\sqrt{n}$. \cite{Aldous99longestincreasing} Thus, $k$ approaches something less than or equal to $2\sqrt{n}$. The worst-case number of comparisons for \textit{delete min} on a standard binary heap is $2\lceil \log{n}-1\rceil$. For the Enc Adaptive List Heap, the worst-case number of comparisons is less than or equal to $2\lceil \log{k}-1\rceil \leq 2\lceil \log{(2\sqrt{n})-1} \rceil = \lceil \log{n} \rceil$. As $n \rightarrow \infty$, the binary heap bound approaches $2\lceil \log{n} \rceil$.
    \end{proof}

\section{Empirical Results} \label{Empirical Results}

    We implemented both the RA List Heap and the EA List Heap for testing purposes, but made no attempts to optimize the code. The results of a series of brief empirical tests against a similarly unoptimized binary heap are presented below. The tests were performed using a codebase written in C and workloads from the 5th DIMACS challenge \cite{DIMACS_5th} (with modifications) and a simple sorting routine. The results presented below are raw wallclock times divided by the minimum time attained by any heap. Thus, 1.00 is the minimum wallclock time and $k$ is $k$ times the minimum.
 
    \begin{table}[h]
        \centering
        \begin{tabular}{ |l|c|c|c| } 
            \hline
             & Sorted Sorting & Random Sorting & Random Dijkstra \\ 
            \hline
            RA List Heap & \textbf{1.00} & 2.29 &\textbf{1.00}\\ 
            EA List Heap &1.15& \textbf{1.00}&1.10 \\
            Binary Heap & 12.62 & 2.52 & 1.22 \\
            \hline
        \end{tabular}
        \caption{Normalized Wallclock Times}
        \label{tab:results}
    \end{table}
    
    Sorted Sorting refers to the task of inserting an $n$-element decreasing sorted sequence into the heap, followed by $n$ consecutive calls to \textit{delete min}. Random sorting refers to the task of inserting an $n$-element random sequence into the heap, followed by $n$ consecutive calls to \textit{delete min}. The results presented for sorting are for $n =$ 10 million. Random Dijkstra refers to Dijkstra workloads from DIMACS. The results presented for Dijkstra are from a strongly connected randomly generated network with 2 million nodes and 8 million edges.
    
    These test results suggest that some variant of the List Heap may be useful in practice. We stress that the results presented above were derived from very simplistic testing. They are merely suggestive that List Heaps may have potential in practice. They are in no way the final word on the empirical performance of the List Heap.
    
\section{Conclusion} \label{Conclusion}
    
    This paper introduced adaptive heaps - heaps whose performance is a function of both the size of the problem instance and the disorder of the problem instance. We introduced the List Heap as a generic structure that can be endowed with adaptive operations. Finally, we presented operations for the List Heap that are adaptive with respect to a number of measures of disorder on the input sequence. 
    
    The discussion within this paper has been relatively informal and the List Heap introduced is far from theoretically optimal. If there is interest in this topic, there are a number of directions from here. Additional empirical testing of the List Heap is needed. We presented two \textit{insert} functions for the List Heap, but there are clearly more (d-array, increasing/decreasing runs, etc.). The optimal choice likely depends on the intended application. If \textit{decrease key} is not needed, List Heaps could be implemented entirely with arrays. On the theory side, the List Heap and discussion of adaptivity included in this paper leaves much to be desired.  We are working on formalizing some notions of adaptivity and have developed a variant of the Fibonacci Heap that is closer to optimal from a theoretical perspective. Modification to other existing heap variants may provide even better results.

\bibliographystyle{plain}
\bibliography{references}

\end{document}